\newtheorem{lemma}{Lemma}
\newtheorem{remark}{Remark}
\begin{document}
\title{A General Analytical Framework for the Resolution of Near-Field Beamforming}
\author{ Chenguang Rao, Zhiguo Ding, \emph{Fellow, IEEE}, Octavia A. Dobre, \emph{Fellow, IEEE}, and Xuchu Dai\thanks{
		The work is supported by the National Natural Science Foundation of China (No. 61971391). 
		
		C. Rao and X. Dai are with the CAS Key Laboratory of Wireless-Optical Communications, University of Science and Technology of China, No.96 Jinzhai Road, Hefei, Anhui Province, 230026, P. R. China. (e-mail: rcg1839@mail.ustc.edu.cn; daixc@ustc.edu.cn).
		
		Z. Ding is with Department of Electrical Engineering and Computer Science, Khalifa University, Abu Dhabi, UAE. (e-mail: zhiguo.ding@ku.ac.ae).
		
		O. A. Dobre is with the Department of Electrical and Computer Engineering, Memorial University, St. John’s, NL A1B 3X5, Canada. (e-mail: odobre@mun.ca).
	}\vspace{-3.2em}}
\maketitle
\begin{abstract}
	The resolution is an important performance metric of near-field communication networks. In particular, the resolution of near field beamforming measures how effectively users can be distinguished in the distance-angle domain, which is one of the most significant features of near-field communications. In a comparison, conventional far-field beamforming can distinguish users in the angle domain only, which means that near-field communication yields the full utilization of user spatial resources to improve spectrum efficiency. In the literature of near-field communications, there have been a few studies on whether the resolution of near-field beamforming is perfect. However, each of the existing results suffers its own limitations, e.g., each is accurate for special cases only, and cannot precisely and comprehensively characterize the resolution. In this letter, a general analytical framework is developed to evaluate the resolution of near-field beamforming. Based on this derived expression, the impacts of parameters on the resolution are investigated, which can shed light on the design of the near-field communications, including the designs of beamforming and multiple access tequniques.
\end{abstract}

\begin{IEEEkeywords} 
	Resolution of near-field communication, uniform linar array, uniform planar array.
\end{IEEEkeywords} 

\vspace{-1em}

\section{Introduction}

Recently, there have been an increasing number of studies on near-field communications \cite{NFC0,NFC1,NFC2}. One of the most significant studies is about the resolution of near-field beamforming, which is a performance metric to measure the effectiveness of distinguishing different users, an important feature of near-field communications. In traditional far-field communications, the distinction among different users relies on angular differences only. However, in near-field communications, thanks to the characteristics of the used spherical wave channel model, the differences in users' distances can also be used for user distinction \cite{beamforming1,beamforming2}. When the resolution is perfect, the users in different locations can be perfectly distinguished, such that user spatial resources can be fully utilized to improve spectrum efficiency. In \cite{Dai}, the resolution was expressed via the Fresnel functions, and was proved to be perfect when the number of base station antennas approaches infinity. Based on this property of resolution, the authors introduced the location based multiple access (LDMA) scheme to improve the spectral efficiency and fully utilize the location information of users. However, in \cite{Ding}, the author pointed out that unless users are close to the base station, the resolution of near-field beamforming is not perfect. Specifically, if the users' distances to the base station are proportional to the Rayleigh distance, the resolution remains poor even if the number of antennas approaches infinity. Based on this observation, the author introduced the non-orthogonal multiple access (NOMA) scheme for near-field communications, which can effectively exploit the imperfect resolution.

It is important to point out that the existing studies about the resolution of near-field beamforming have limitations, being accurate only in special cases. Except for these cases, there is a lack of an analytical framework to characterize the resolution precisely and comprehensively, which motivates our work. In this letter, a general analytical framework is developed to evaluate the resolution of near-field beamforming, where the obtained analytical results can be applied to the cases with a uniform planar array (UPA) and a uniform linar array (ULA). In particular, a concise closed-form expression for the resolution of near-field beamforming is introduced. Compared with the existing results, this expression can be evaluated without involving integration calculations and has a higher accuracy, which can facilitate studies related to near-field communications. Based on the derived expression of resolution, the impacts of various parameters are also investigated. Specifically, the conditions for achieving perfect and imperfect resolution are analyzed in this paper, which provides insights for more studies on resolution, such as the designs of beamforming and multiple access techniques. In addition, the case when the base station is equipped with a ULA is also analyzed as a special case. Finally, the analytical results are validated through the simulation results. 

\vspace{-1em}

\section{System Model}

Consider a communication network with a base (\(\mathrm{BS}\)), equipped with a UPA with \((2M+1)\times(2N+1)\) elements. The antenna spacing is denoted by \(d\). The rectangular coordinates of the UPA elements are given by \(\mathbf{s}_{m,n} = (nd,0,md)\), \(-M\leq m\leq M\), \(-N\leq n\leq N\). In this letter, \(d\) is set as \(d=\frac{\lambda}{2}\), where \(\lambda\) represents the carrier wavelength. The \(i\)-th user is denoted by \(\mathrm{U}_i\), whose spherical coordinate is given by \(\mathbf{r}_i = (r_i,\theta_i,\phi_i)\), with \(0<\theta_i,\phi_i<\pi\), and \(r_i\) assumed to be \(r_i\leq d_{Ray}\), where \(d_{Ray} = \frac{8d^2(M^2+N^2)}{\lambda} = 2\lambda(M^2+N^2)\) represents the Rayleigh distance. The channel vector between \(\mathrm{BS}\) and \(\mathrm{U}_i\) is modeled as \(\mathbf{h}_i^H = \beta_i\mathbf{b}_i^H\),
where \(\beta_i = \frac{\sqrt{t}\lambda}{4\pi r_i}\) is the channel gain, and
\begin{equation}\label{bi}
	\begin{small}
		\begin{aligned}
			\mathbf{b}_i =& \frac{1}{\sqrt{t}}
			\left[e^{-j\frac{2\pi}{\lambda}(||\mathbf{r}_i-\mathbf{s}_{-M,-N}||-r_i)},e^{-j\frac{2\pi}{\lambda}(||\mathbf{r}_i-\mathbf{s}_{-M,-N+1}||-r_i)},\right.\\&\dots\ ,e^{-j\frac{2\pi}{\lambda}(||\mathbf{r}_i-\mathbf{s}_{M,N}||-r_i)}\left]^T,\right.
		\end{aligned}
	\end{small}
\end{equation}
represents the response array, with \(t = (2M+1)(2N+1)\) as the normalization coefficient, and \(||\cdot||\) as the norm of a vector. By applying the Fresnel approximation in \cite{Fresnel1}, \(||\mathbf{r}_i-\mathbf{s}_{m,n}||\) can be expressed as follows\footnote{This approximation does not cause much loss of accuracy, which has been validated in many studies, such as \cite{Dai},\cite{Fresnel2}.}:
\begin{equation}\label{ri}
	\begin{small}
		\begin{aligned}
			||\mathbf{r}_i-\mathbf{s}_{m,n}|| &\approx r_i-nd\cos\theta_i\sin\phi_i+\frac{n^2d^2(1-\cos^2\theta_i\sin^2\phi_i)}{2r_i}\\
			&-md\cos\phi_i+\frac{m^2d^2\sin^2\phi_i}{2r_i}.
		\end{aligned}
	\end{small}
\end{equation}
The received message by \(\mathrm{U}_i\) can be expressed as follows:
\begin{equation}
	y_i = \mathbf{h}_i^H\mathbf{p}x+n_i,
\end{equation}
where \(\mathbf{p}\) represents the beamforming vector, and \(n_i\) is the white noise with power \(P_N\). Define the resolution of near-field beamforming as \(\Delta = |\mathbf{b}_1^H\mathbf{b}_2|^2\), which measures the effectiveness of distinguishing the near-field users. Specifically, if \(\Delta\approx 0\), the resolution is considered perfect. In this case, the response vectors for two different users are orthogonal, i.e., the near-field beamforming can focus the energy on a specific location, which facilitate the implementation of LDMA\cite{Dai}. On the other hand, if \(\Delta\approx 1\), the resolution is considered poor. In this case, different users share the same near-field beam, which facilitates the implementation of NOMA\cite{Ding}. Therefore, the characteristic of \(\Delta\) is significant for near-field communications, which will be discussed in the following section.

\vspace{-1em}

\section{Resolution of Near-Field Beamforming}
\subsection{Main Results}
In this section, a closed-form expression of \(\Delta\) is derived. Then the impacts of various parameters, i.e., \(\theta_i,\phi_i,r_i,M,N\), on \(\Delta\) are investigated.

For notational simplicity, some definitions are introduced as follows: \(a = \frac{1}{2}(\cos\phi_1-\cos\phi_2)\), \(b = \frac{1}{2}(\cos\theta_1\sin\phi_1-\cos\theta_2\sin\phi_2)\), \(c = \frac{\lambda}{8}\left(\frac{\sin^2\phi_1}{r_1}-\frac{\sin^2\phi_2}{r_2}\right)\), and \(z = \frac{\lambda}{8}\left(\frac{1-\cos^2\theta_1\sin^2\phi_1}{r_1}-\frac{1-\cos^2\theta_2\sin^2\phi_2}{r_2}\right)\).

\begin{lemma}\label{T1}
	\(\Delta\) can be expressed as follows:
	\begin{equation}\label{D2}
		\Delta \approx \frac{1}{t^2}(\sum_{i=1}^{3}I_i+\frac{I_2I_3}{t}),
	\end{equation}
	where \(I_1 = t\),
	\begin{equation}
		I_2 = 2(2M+1)\sum_{s = 1}^{2N}\Phi(z,s,N)\cos(2\pi bs),
	\end{equation}
	\begin{equation}
		I_3 = 2(2N+1)\sum_{r=1}^{2M}\Phi(c,r,M)\cos(2\pi ar),
	\end{equation}
	\begin{equation}\label{S}
		\Phi(x,y,K) = \frac{\sin(2\pi xy(2K-y+1))}{\sin(2\pi xy)}.
	\end{equation}
\end{lemma}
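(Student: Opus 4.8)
\emph{Proof sketch (plan).} The plan is to substitute the Fresnel approximation \eqref{ri} into $\Delta=|\mathbf b_1^H\mathbf b_2|^2$ and then collapse the resulting double sum to \eqref{D2} by elementary geometric-series (Dirichlet-kernel) identities. Writing $\mathbf b_1^H\mathbf b_2$ out from \eqref{bi}, the $(m,n)$-th entry contributes the phase $\tfrac{2\pi}{\lambda}\big[(\|\mathbf r_1-\mathbf s_{m,n}\|-r_1)-(\|\mathbf r_2-\mathbf s_{m,n}\|-r_2)\big]$. Inserting \eqref{ri} and using $d=\tfrac{\lambda}{2}$, this phase equals $2\pi\big(n^2 z-nb\big)+2\pi\big(m^2 c-ma\big)$, with $a,b,c,z$ as defined before the lemma. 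The crucial observation is that \eqref{ri} contains no $mn$ cross-term, so the exponent splits additively in $m$ and $n$, and the double sum factorizes:
\begin{equation*}
\mathbf b_1^H\mathbf b_2\;\approx\;\frac{1}{t}\Big(\underbrace{\sum_{n=-N}^{N}e^{j2\pi(n^2 z-nb)}}_{=:A}\Big)\Big(\underbrace{\sum_{m=-M}^{M}e^{j2\pi(m^2 c-ma)}}_{=:B}\Big),
\end{equation*}
hence $\Delta\approx\frac{1}{t^2}|A|^2|B|^2$.

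Next I would evaluate $|A|^2$. Expanding $|A|^2=\sum_{n,n'}e^{j2\pi((n^2-n'^2)z-(n-n')b)}$ and substituting $s=n-n'$, the identity $n^2-n'^2=s(2n'+s)$ turns the term for fixed $s$ into $e^{j2\pi(s^2 z-sb)}$ times a geometric series $\sum_{n'}e^{j4\pi s z n'}$ over the overlap range of $n'$ (which for $s>0$ is $n'\in[-N,N-s]$, i.e.\ $2N-s+1$ terms). Summing that series via $\sum_{k=0}^{L-1}e^{jk\omega}=e^{j(L-1)\omega/2}\sin(L\omega/2)/\sin(\omega/2)$, the accompanying linear- and quadratic-in-$s$ phase factors cancel the $e^{j2\pi s^2 z}$, leaving exactly $e^{-j2\pi bs}\Phi(z,s,N)$ with $\Phi$ as in \eqref{S}. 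Pairing $s$ with $-s$ (the $s<0$ contribution being the conjugate of the $s>0$ one) and isolating the $s=0$ term (equal to $2N+1$) gives $|A|^2=(2N+1)+2\sum_{s=1}^{2N}\Phi(z,s,N)\cos(2\pi bs)$; the identical argument with $(m,c,a,M,r)$ in place of $(n,z,b,N,s)$ yields $|B|^2=(2M+1)+2\sum_{r=1}^{2M}\Phi(c,r,M)\cos(2\pi ar)$.

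Finally I would just multiply the two expressions. The product of the leading constants is $(2M+1)(2N+1)=t=I_1$; cross-multiplying each leading constant with the other sum produces precisely $I_2$ and $I_3$; and the product of the two sums equals $\tfrac{I_2}{2M+1}\cdot\tfrac{I_3}{2N+1}=\tfrac{I_2 I_3}{t}$. Dividing by $t^2$ gives \eqref{D2}. Note that the only approximation invoked anywhere is the Fresnel expansion \eqref{ri}; everything afterwards is exact algebra, which is why \eqref{D2} is written with ``$\approx$''.

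The step I expect to be the main obstacle is the reduction of $|A|^2$ (and $|B|^2$): pinning down the overlap index range for $n'$ so that the argument $2N-s+1$ (that is, $2K-y+1$ in \eqref{S}) emerges correctly, and checking that all the quadratic-in-$s$ phases cancel so that a clean real $\cos(2\pi bs)$ survives after symmetrizing over $\pm s$. The factorization in the first step and the bookkeeping of the product in the last step are routine by comparison.
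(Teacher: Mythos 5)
Your proof is correct, but it follows a genuinely different route from the paper's. The paper never factorizes the array-response product: it writes $\Delta \approx \frac{1}{t^2}\bigl|\sum_{m,n}e^{2j\pi f(m,n)}\bigr|^2$ with $f(m,n)=-am-bn+cm^2+zn^2$, expands the squared modulus as a quadruple sum, passes to the difference variables $(r,s)$, and splits the result into five cases ($r=s=0$; $r=0,s\neq0$; $s=0,r\neq0$; $r,s$ of equal sign; $r,s$ of opposite sign), evaluating each by the same Dirichlet-kernel step that you use and then merging the two mixed cases via $\cos(2\pi(ar+bs))+\cos(2\pi(ar-bs))=2\cos(2\pi ar)\cos(2\pi bs)$ to identify $I_4+I_5=\frac{I_2I_3}{t}$. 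You instead exploit the separability of the Fresnel phase (no $mn$ cross-term in \eqref{ri}), so that $\mathbf b_1^H\mathbf b_2\approx\frac{1}{t}AB$ with one-dimensional sums $A,B$, reduce each of $|A|^2,|B|^2$ by the identical $s=n-n'$ substitution and geometric-series evaluation (your bookkeeping is right: the overlap range gives the $2N-s+1$ factor, the $e^{\pm j2\pi zs^2}$ phases cancel, and the $\pm s$ pairing yields the real cosine), and then obtain \eqref{D2} by simply multiplying out. Both arguments hinge on the same kernel evaluation producing $\Phi$ in \eqref{S}; what your factorization buys is that the cross term $\frac{I_2I_3}{t}$ appears automatically as the product of the two one-dimensional interference sums rather than having to be recognized a posteriori from $I_4+I_5$, it makes transparent that (under the Fresnel approximation) $\Delta$ is exactly a product of an azimuth/elevation factor and hence that the ULA case of Lemma 2 is the single-factor special case, and it avoids the sign-case bookkeeping entirely. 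The paper's direct two-dimensional decomposition reaches the same expression, in effect re-deriving this factorization term by term.
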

\begin{proof}
	See Appendix \ref{Appendix1}.
\end{proof}

Compared to \cite{Ding}, the high accuracy approximation of \(\Delta\) can be obtained from Lemma \ref{T1} for arbitrary choices of the users' distances to the \(\mathrm{BS}\). Compared with the result in \cite{Dai}, (\ref{D2}) does not require integral calculations and can be evaluated with a lower complexity. 

From (\ref{D2}), it can be observed that the value of \(\Delta\) depends on the parameters \(\theta_i,\phi_i,r_i,M,N\), and \(\lambda\). The asymptotic characteristic of \(\Delta\) when \(M,N\to +\infty\) is important, and will be discussed in the following.

\begin{remark}\label{R1}
	When \(\theta_1\neq\theta_2\) or \(\phi_1\neq\phi_2\), at least one of \(a\) and \(b\) is non-zero. By applying the inequality \(\frac{\sin nx}{\sin x}\leq n\), when \(b\neq 0\), the upper bound on \(I_2\) can be obtained as follows:
	\begin{equation}
		\begin{small}
			\begin{aligned}
				I_2&\leq 2(2M+1)\sum_{s=1}^{2N}(2N-s+1)\cos(2\pi bs)\\
				&=(2M+1)\frac{(2N+1)cos(2\pi b)-cos(2\pi b(2N+1))-2N}{2\sin^2(\pi b)}\\
				&\leq\frac{2M+1}{\sin^2(\pi b)}.
			\end{aligned}
		\end{small}
	\end{equation}
	Similarly, when \(a \neq 0\), the upper bound on \(I_3\) can be derived as \(I_3\leq\frac{2N+1}{\sin^2(\pi a)}\). Then, when both \(a\) and \(b\) are non-zero, \(\Delta\) can be known as follows:
	\begin{equation}
		\Delta\leq \frac{1+2M+2N}{t}+\max\{\frac{2M(2M+1)}{t^2\sin^2(\pi b)},\frac{2N(2N+1)}{t^2\sin^2(\pi a)}\}.
	\end{equation}
	When \(t = (2M+1)(2N+1)\to +\infty\), the right side of the inequality approaches \(0\), so \(\Delta\to 0\). This result indicates that the resolution of near-field beamforming becomes perfect in the angle domain, which is similar to the conclusion made in the far-field communication network. 
\end{remark}

\begin{remark}
	When \(\theta_1=\theta_2=\theta\) and \(\phi_1=\phi_2=\phi\), \(\cos(2\pi ar) = \cos(2\pi bs) = 0\). Notice that when \(2\pi xy(2K-y+1)\) is small (\(\leq0.25\)), the small angle approximation \(\sin x\approx x\) can be applied. In this case, \(\Phi(x,y,K)\approx 2K-y+1\), and
	\begin{equation}
		I_2 \approx 2(2M+1)\sum_{s = 1}^{2N}(2N-s+1) = 2Nt, I_3\approx 2Mt.
	\end{equation}
	Then \(\Delta\) can be approximated as follows:
	\begin{equation}
		\Delta \approx \frac{1}{t^2}(t+2Mt+2Nt+4MNt^2) = 1.
	\end{equation}
	This approximation requires the following conditions:
	\begin{equation}
		\begin{aligned}
			&2\pi zs(2N-s+1)\leq 0.25,2\pi cr(2N-r+1)\leq 0.25,\\
			&s = 1,\dots,N,r=1,\dots,M.
		\end{aligned}
	\end{equation}
	Since \(s(2N-s+1) \leq N(N+1)\), \(r(2M-r+1) \leq M(m+1)\), and the conditions can be solved as \(z\leq\frac{1}{8\pi N(N+1)}\), \(c\leq\frac{1}{8\pi M(M+1)}\). When \(r_1\approx r_2\), \(z,c\approx 0\), and in this case \(\Delta\approx 1\). When \(r_1\neq r_2\), with \(r = \min\{r_1,r_2\}\), \(c\) and \(z\) can be estimated respectively as follows:
	\begin{equation}
		\begin{aligned}
			c &= \frac{\sin^2\phi}{8}(\frac{1}{r_1}-\frac{1}{r_2})\lambda \sim \frac{\sin^2\phi}{8r}\lambda
		\end{aligned}
	\end{equation}
	\begin{equation}
		\begin{aligned}
			z &= \frac{1-\cos^2\theta\sin^2\phi}{\sin^2\phi}c \sim \frac{1-\cos^2\theta\sin^2\phi}{8r}\lambda.
		\end{aligned}
	\end{equation}
	Then, the conditions can be derived:
	\begin{equation}\label{Cr}
		r\geq \pi\lambda \max\{M(M+1)\sin^2\phi,N(N+1)(1-\cos^2\theta\sin^2\phi)\}.
	\end{equation}
	When this condition holds, \(\Delta\approx 1\), otherwise, the error of the small-angle approximation cannot be negligible. In this case, \(\Phi(x,y,K)< 2K-y+1\), and \(\Delta<1\). It is worth to point out that the expression in Lemma \ref{T1} still holds for this case with high accuracy. 
\end{remark} 

\begin{remark}\label{R3}
	In practice, the Rayleigh distance \(d_{Ray}\) increases as \(M,N\) increase, i.e., the range of near-field communication becomes larger. For this consideration, we denote \(r = \beta d_{Ray}\), \(0< \beta\leq 1\); then the condition (\ref{Cr}) can be expressed as follows:
	\begin{equation}\label{Cbeta}
		\begin{aligned}
			\beta\geq&\frac{\pi}{2(M^2+N^2)}\times\\
			&\max\{M(M+1)\sin^2\phi,N(N+1)(1-\cos^2\theta\sin^2\phi)\}.
		\end{aligned}
	\end{equation}
	Denote \(l=N/M\).p When \(N\to +\infty\), \(\frac{N(N+1)}{M^2+N^2}\to \frac{1}{l^2+1}\), \(\frac{M(M+1)}{M^2+N^2}\to \frac{l^2}{l^2+1}\). Then (\ref{Cbeta}) can be rewritten as
	\begin{equation}\label{Cbeta1}
		\beta\geq\frac{\pi}{2}\max\{\frac{l^2\sin^2\phi}{l^2+1},\frac{1-\cos^2\theta\sin^2\phi}{l^2+1}\}.
	\end{equation}
	When (\ref{Cbeta1}) holds, \(\Delta\approx 1\) even if \(M,N\to+\infty\). This result is consistent with the Corollary 1 in \cite{Ding}.
\end{remark}

\begin{remark}
	The analysis in Remark \ref{R3} focuses on the case where \(r_1\) and \(r_2\) increase as \(M\) and \(N\) increase. If \(r_1\) and \(r_2\) are constant, condition (\ref{Cr}) cannot hold as \(M\) and \(N\) increase continually, which can cause a reduction of \(\Delta\). When \(M,N\to+\infty\), \(\Delta\to 0\). This result is consistent with the conclusion in \cite{Dai}.
\end{remark}

\begin{remark}
	Summarizing the above discussion about \(\Delta\) when \(\theta_1=\theta_2=\theta\) and \(\phi_1=\phi_2=\phi\), we have
	\begin{itemize}
		\item Considering the impact of \(\beta = \frac{\min\{r_1,r_2\}}{d_{Ray}}\) on \(\Delta\), the conclusion is that a larger \(\beta\) leads to a larger \(\Delta\), until the inequality (\ref{Cbeta}) holds, where \(\Delta\approx 1\). In this case, even if \(M,N\to+\infty\), \(\Delta\) is still approximately \(1\). Conversely, when \(\beta\) decreases, D also decreases until it approaches zero.
		\item  Considering the impacts of \(M,N\) on \(\Delta\) when the users' locations are fixed, the conclusion is that if \(r_1\) and \(r_2\) are constant, \(\Delta\to 0\), when \(M,N\to+\infty\).
		\item  Note that except for the special cases mentioned above, \(\Delta\) does not take on values of \(0\) or \(1\); instead, it takes a value between zero and one. The specific value depends on the parameters and can be obtained from (\ref{D2}). The impacts of parameters on \(\Delta\) will be further discussed in Section \ref{simulation}.
	\end{itemize}
\end{remark}

\subsection{The Case of ULA}
When \(M=0,\phi_1 = \phi_2 = \frac{\pi}{2}\), the UPA is degraded to the ULA. The results for ULA are present in the following:
\begin{lemma}
	\(\Delta\) can be expressed as:
	\begin{equation}
			\begin{aligned}
				\Delta = \frac{1}{2N+1}+\frac{2}{(2N+1)^2}\sum_{s=1}^{2N}\Phi(z,s,N)\cos(2\pi bs).
			\end{aligned}
	\end{equation}
\end{lemma}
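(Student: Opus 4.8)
The plan is to obtain the ULA result as a direct specialization of Lemma~\ref{T1} by substituting $M=0$ and $\phi_1=\phi_2=\frac{\pi}{2}$ into the closed-form expression~\eqref{D2}. First I would set $M=0$, so that $t=(2M+1)(2N+1)=2N+1$, and observe that the sum $I_3 = 2(2N+1)\sum_{r=1}^{2M}\Phi(c,r,M)\cos(2\pi ar)$ becomes an empty sum (the index runs $r=1,\dots,0$), hence $I_3=0$. Consequently the cross term $\frac{I_2 I_3}{t}$ in~\eqref{D2} vanishes as well, and~\eqref{D2} collapses to $\Delta \approx \frac{1}{t^2}(I_1+I_2) = \frac{1}{(2N+1)^2}\big((2N+1)+I_2\big)$.

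Next I would simplify $I_2 = 2(2M+1)\sum_{s=1}^{2N}\Phi(z,s,N)\cos(2\pi bs)$ under $M=0$, which gives $I_2 = 2\sum_{s=1}^{2N}\Phi(z,s,N)\cos(2\pi bs)$. Substituting this into the collapsed expression yields
\begin{equation}
  \Delta \approx \frac{1}{2N+1} + \frac{2}{(2N+1)^2}\sum_{s=1}^{2N}\Phi(z,s,N)\cos(2\pi bs),
\end{equation}
which is exactly the claimed identity. I would also note that the definitions of $b$ and $z$ simplify under $\phi_1=\phi_2=\frac{\pi}{2}$: since $\sin\phi_i=1$ and $\cos\phi_i=0$, we get $a=0$ (consistent with the vanishing of the $\cos(2\pi ar)$ terms), $b=\frac{1}{2}(\cos\theta_1-\cos\theta_2)$, $c=\frac{\lambda}{8}(\frac{1}{r_1}-\frac{1}{r_2})$, and $z=\frac{\lambda}{8}(\frac{\sin^2\theta_1}{r_1}-\frac{\sin^2\theta_2}{r_2})$; these are worth recording for downstream use, though they are not strictly needed to state the lemma.

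There is essentially no hard step here: the only subtlety is the careful handling of the degenerate summation range when $M=0$, i.e.\ arguing cleanly that $I_3$ and the cross term $I_2I_3/t$ both drop out, and checking that the Fresnel approximation~\eqref{ri} and all intermediate manipulations in the proof of Lemma~\ref{T1} remain valid in the one-dimensional ULA geometry (they do, since setting $m$ identically zero in~\eqref{ri} simply removes the $m$-dependent terms and reduces the double sum over $(m,n)$ to a single sum over $n$). One could alternatively re-derive the ULA expression from scratch by expanding $|\mathbf{b}_1^H\mathbf{b}_2|^2 = \frac{1}{t^2}\sum_{n_1,n_2}e^{j\frac{2\pi}{\lambda}(\Delta\text{-phase})}$ directly, which would make the argument self-contained, but specializing Lemma~\ref{T1} is the shortest route and is the approach I would present.
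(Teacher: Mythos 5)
Your proposal is correct and is essentially the paper's own proof, which simply sets $M=0$ in Lemma~\ref{T1}; your additional bookkeeping (noting $t=2N+1$, that $I_3$ and the cross term $I_2I_3/t$ vanish as empty sums, and the simplified forms of $a,b,c,z$ under $\phi_1=\phi_2=\frac{\pi}{2}$) just makes the same specialization explicit.
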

\begin{proof}
	By setting \(M=0\), the result can be obtained directly from Lemma \ref{T1}.
\end{proof}
The value of \(\Delta\) depends on the parameters \(\theta_i,r_i,N\). Similar results with the UPA case are presented as follows:
\begin{remark}
	If \(\theta_1\neq\theta_2\), then \(\Delta\to 0\) when \(N\to+\infty\).
\end{remark}
\begin{remark}
	When \(\theta_1=\theta_2=\theta\), denote \(r = \min\{r_1,r_2\} = kr_R\), and \(0<\beta\leq1\). Under this conditions, if \(\beta\geq \frac{\pi\sin^2\theta}{2}\), \(\Delta\approx 1\) when \(N\to +\infty\). This result agrees with the Corollary 1 in \cite{Ding}.
\end{remark}
\begin{remark}
	When \(\theta_1=\theta_2=\theta\), and if \(r_1\) and \(r_2\) are fixed, then \(\Delta\approx 0\) when \(N\to+\infty\). This result agrees with the Corollary 1 in \cite{Dai}.
\end{remark}

\vspace{-2em}

\section{Simulation Results}\label{simulation}

In this section, simulations are performed to verify the analytical results introduced in the previous section. The carrier wavelength is set as \(\lambda = 10^{-2}\) m, and the antenna spacing is chosen as \(d = \frac{\lambda}{2}=5\times10^{-3}\) m. \(N\) is set as \(N=128\). Each pair of users are located at the same angle, i.e, \(\theta_1=\theta_2=\theta\) and \(\phi_1=\phi_2=\phi\). 

Fig. \ref{figure_UPAk} presents the numerical and analytical results of \(\Delta\) in the case of UPA with different \(\beta,M,N\). This figure demonstrates the high accuracy of the analytical results in Lemma \ref{T1}. From the figure, it can be observed that when \(\beta\) is large, \(\Delta\approx 1\). When \(\beta\) decreases, \(\Delta\) also decreases, i.e., when users are closer to the \(\mathrm{BS}\), the resolution enhances. However, when \(\beta = 0.1\), the value of \(\Delta=0.4\) remains large. This indicates that within the majority of the near-field communication range, the resolution of near-field beamforming is not perfect. In addition, when \(N\) is fixed, a smaller \(M\) leads to a better resolution. It can be indicated that a larger difference between \(M\) and \(N\) leads to a better resolution.

\begin{figure}
	\centering
	\includegraphics[scale=0.6]{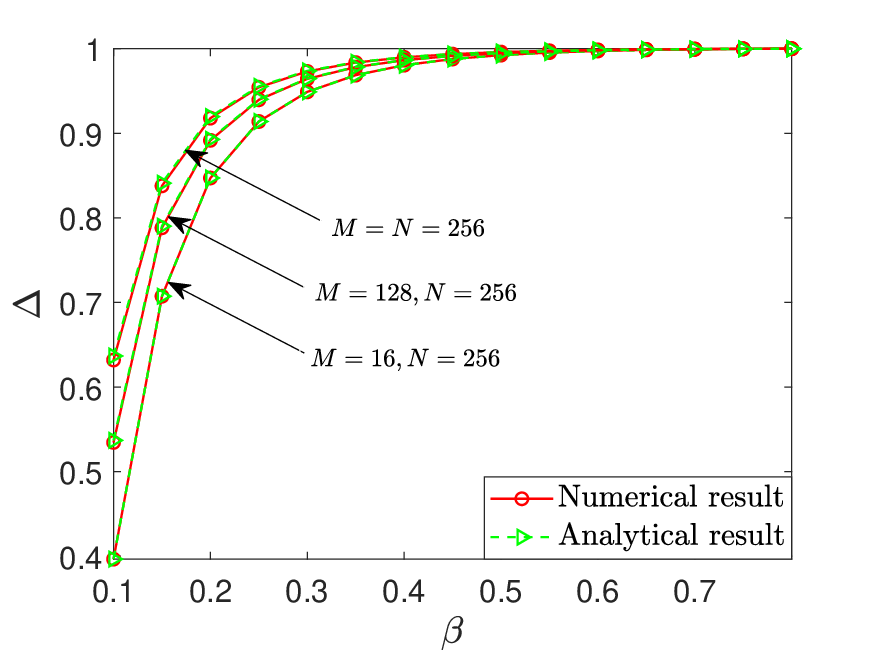}
	\caption{The results for \(\Delta\) in the case for UPA.} 
	\label{figure_UPAk}
\end{figure}

Fig. \ref{figure_ULAk} presents the results of \(\Delta\) in the case of ULA, where the impact of \(\beta\) on \(\Delta\) is investigated. For comparison, the approximated results in \cite{Ding} and \cite{Dai} are also presented. Compared with the other two approximated results, the analytical results from (\ref{D2}) have the best accuracy. In addition, similar to Fig. \ref{figure_UPAk}, when \(\beta\) is large, \(\Delta\approx 1\). As \(\beta\) decreases, \(\Delta\) decreases as well, until \(\beta\) reaches \(0.2\), at which point \(\Delta\) drops to about \(0.9\). This observation is consistent with the conclusion in Fig \ref{figure_UPAk}, i.e., within the majority of the near-field range, the near-field beamforming resolution is not perfect.

\begin{figure}
	\centering
	\includegraphics[scale=0.6]{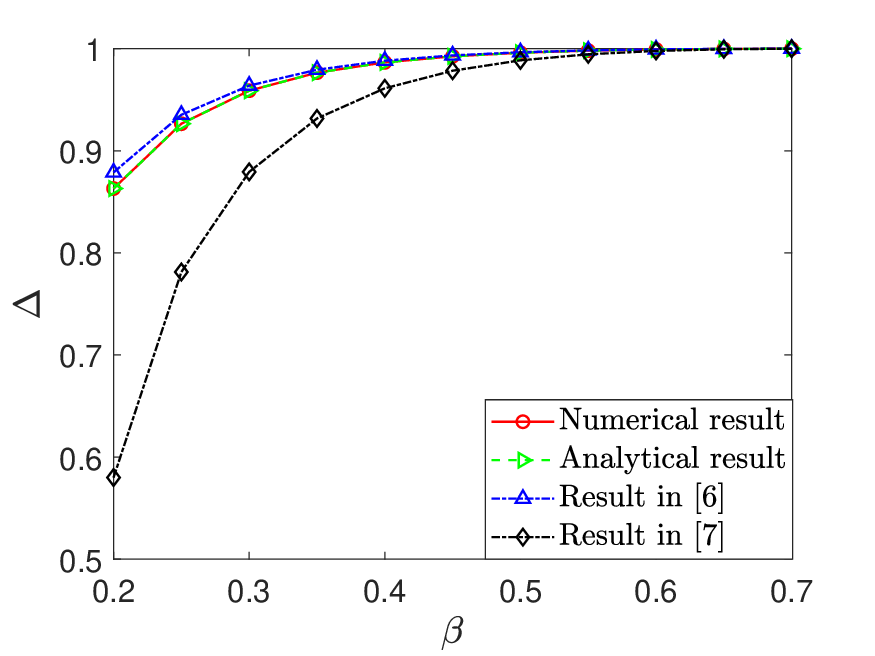}
	\caption{The results for \(\Delta\) in the case for ULA.}
	\label{figure_ULAk}
\end{figure}

In Fig. \ref{figure_ULAr}, the impact of \(r\) is investigated. The results in \cite{Dai} are also presented for comparison. As shown in the figure, both  approximated results have good accuracy, while the new analytical results presented in this letter are more accurate. It can be observed that as \(r_1\) increases, \(\Delta\) increases until \(\Delta \approx 1\), at which point \(r_1=r_2\). Then as \(r_1\) continues to increase, \(\Delta\) decreases until around \(0.1\). This indicates that when one user is very close to \(\mathrm{BS}\), the resolution is good if another user is not close. In addition, it can be seen that the resolution is poorer when \(r_2=40\) m, which indicates that the resolution for the case with users close to \(\mathrm{BS}\) is better. Therefore, to achieve a better resolution in near-field communication beamforming, it is necessary to ensure that at least one user is very close to \(\mathrm{BS}\) while setting a significant distance between the users.

\begin{figure}
	\centering
	\includegraphics[scale=0.6]{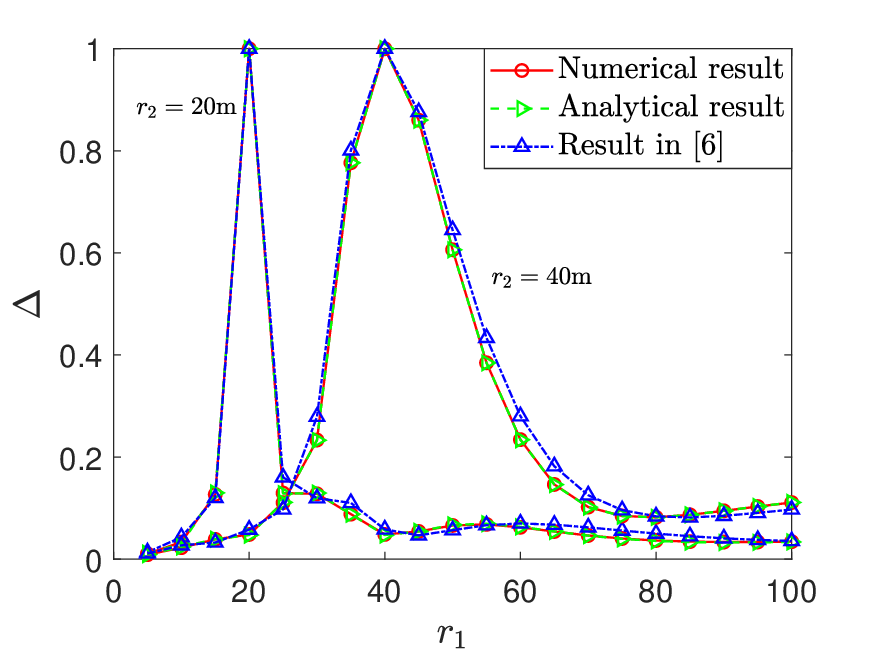}
	\caption{The results for \(\Delta\) in the case of ULA. \(r_2\) is fixed.}
	\label{figure_ULAr}
\end{figure}

To investigate the impact of \(r\) on \(\Delta\), Fig. \ref{figure_ULAr2} presents the results when \(r_2-r_1\) is constant. It can be observed that when \(r_1\) is very small, \(\Delta\approx 0\). As \(r_1\) increases, \(\Delta\) also increases. When \(r_1\) reaches \(100\) m, \(\Delta\) reaches around \(0.9\). This suggests that when two users are close to the base station, the resolution of near-field beamforming is good. On the other hand, if both users are far from \(\mathrm{BS}\), the resolution will deteriorate with increasing distances. Recall that the Rayleigh distance \(d_{Ray} \approx 1300\) m for this setup. When the \(r_1\) increases to about \(1/13\) of the Rayleigh distance, \(\Delta\) is almost \(1\), and the resolution at this point is poor. This indicates a conclusion similar to Fig. \ref{figure_UPAk} and Fig. \ref{figure_ULAk}, i.e., within the majority of the near-field communication distance, the resolution is not good. In addition, the figure shows that the resolution for the case with \(r_2-r_1=40\) m is better than the resolution for the case with \(r_2-r_1=20\) m. This is because the users which are further apart are easier to be distinguished.

\begin{figure}
	\centering
	\includegraphics[scale=0.6]{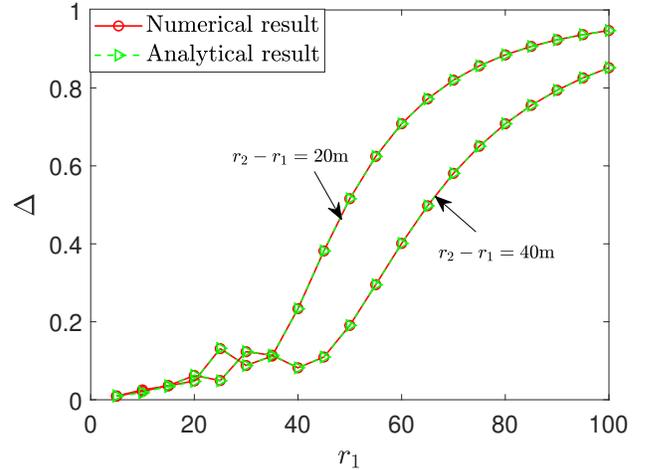}
	\caption{The results for \(\Delta\) in the case of ULA. \(r_2-r_1\) is constant.}
	\label{figure_ULAr2}
\end{figure}

\vspace{-1em}

\section{Conclusion}
In this letter, a concise, general, and high-accuracy closed-form expression of the resolution of the near-field beamforming \(\Delta\) has been obtained, and used to investigating the impacts of parameters on \(\Delta\). It has been shown that when \(\beta = \frac{\min\{r_1,r_2\}}{d_{Ray}}\) is large enough, \(\Delta\approx 1\)  even if \(M,N\to +\infty\). Another extreme case is that when two users' locations are fixed, \(\Delta\approx 0\) when \(M,N\to +\infty\). In most cases, \(0<\Delta<1\). In addition, the special case of ULA has also been analyzed. Finally, simulation results have been presented to verify the derived analytical results.

\appendices
\section{Proof of Theorem \ref{T1}}\label{Appendix1}
Define the function \(f(m,n)\) as follows:
\begin{equation}
	f(m,n) = -am-bn+cm^2+zn^2.
\end{equation}
Then \(\Delta\) can be written as follows:
\begin{equation}
	\Delta \approx \frac{1}{t^2}|\sum_{m=-M}^{M}\sum_{n=-N}^{N}\exp\{2j\pi f(m,n)\}|^2.
\end{equation}
Expanding the square of complex modulus, \(\Delta\) can be derived  as:
\begin{equation}
	\begin{aligned}
		\Delta &= \frac{1}{t^2}\sum_{m=-M}^{M}\sum_{n=-N}^{N}\sum_{p=-M}^{M}\sum_{q=-N}^{N}e^{2j\pi (f(p,q)-f(m,n))}\\
		&= \frac{1}{t^2}\sum_{m=-M}^{M}\sum_{n=-N}^{N}\sum_{r=-M-m}^{M-m}\sum_{s=-N-n}^{N-n}e^{2j\pi (f(m+r,n+s)-f(m,n))}\\
		&= \frac{1}{t^2}\sum_{r=-2M}^{2M}\sum_{s=-2N}^{2N}J,
	\end{aligned}
\end{equation}
where
\begin{equation}
	J = \sum_{-M-r\leq m\leq M-r\atop-M\leq m\leq M}\sum_{-N-s\leq n\leq N-s\atop-N\leq n\leq N}e^{2j\pi (f(m+r,n+s)-f(m,n))}.
\end{equation}
Since \(e^{2j\pi x}+e^{-2j\pi x} = 2\cos(2\pi x)\), \(\Delta\) can be expressed by the sum of multiple terms with different \(r\) and \(s\) as \(\Delta = \frac{1}{t^2}\sum_{i=1}^{5}I_i\), where \(I_i\) is expressed as follows:
\begin{itemize}
	\item \(I_1\): when \(r=s=0\).
	\begin{equation}
		I_1 = \sum_{m=-M}^{M}\sum_{n=-N}^{N}e^0 = (2M+1)(2N+1).
	\end{equation}
	\item \(I_2\): when \(r=0\) and \(s\neq 0\).
	\begin{equation}
		I_2 = \sum_{-2N\leq s\leq2N\atop s\neq 0}^{2N}\sum_{m=-M}^{m=M}\sum_{-N-s\leq n\leq N-s\atop-N\leq n\leq N}e^{2j\pi (f(m,n+s)-f(m,n))}.
	\end{equation}
	We can pair the terms of \((s_0,n_0)\) and \((-s_0,n_0+s_0)\), which leads to the following expression: 
	\begin{equation}
		\begin{aligned}
			I_2 
			&= \sum_{s = 1}^{2N}\sum_{m=-M}^{m=M}\sum_{n=-N}^{N-s}(e^{2j\pi(f(m,n+s)-f(m,n))}\\
			&\quad\quad\quad\quad\quad\quad\quad\quad\quad+e^{2j\pi(f(m,n)-f(m,n+s))})\\
			&= 2\sum_{s = 1}^{2N}\sum_{m=-M}^{m=M}\sum_{n=-N}^{N-s}\cos(2\pi(f(m,n+s)-f(m,n)))\\
			&= 2(2M+1)\sum_{s = 1}^{2N}\sum_{n=-N}^{N-s}\cos(2\pi(-bs+z(s^2+2ns)))\\
			& = 2(2M+1)\sum_{s = 1}^{2N}\Phi(z,s,N)\cos(2\pi bs).
		\end{aligned}
	\end{equation}
	\item \(I_3\): when \(s=0\) and \(r\neq 0\). This case is similar to \(I_2\). \(I_3\) can be found to be given as:
	\begin{equation}
		I_3 = 2(2N+1)\sum_{r=1}^{2M}\Phi(c,r,M)\cos(2\pi ar).
	\end{equation}
	\item \(I_4\): when \(s,r\geq 1\) or \(s,r\leq -1\). We can pair the term with \((r_0,s_0,m_0,n_0)\) and \((-r_0,-s_0,m_0+r_0,n_0+s_0)\), and obtain
	\begin{equation}
		\begin{small}
			\begin{aligned}
				I_4 &= \sum_{r=1}^{2M}\sum_{s=1}^{2N}\sum_{m=-M}^{M-r}\sum_{n=-N}^{N-s}(e^{2j\pi(f(m+r,n+s)-f(m,n))}\\
				&\quad\quad\quad\quad\quad\quad\quad\quad\quad+e^{2j\pi(f(m,n)-f(m+r,n+s))})\\
				&=2\sum_{r=1}^{2M}\sum_{s=1}^{2N}\Phi(z,s,N)s\\
				&\quad\quad\quad\quad\quad\times\sum_{m=-M}^{M-r}\cos(2\pi(-ar-bs+c(r^2+2mr)))\\
				&=2\sum_{r=1}^{2M}\sum_{s=1}^{2N}\Phi(z,s,N)\Phi(c,r,M)\cos(2\pi(ar+bs)).
			\end{aligned}
		\end{small}
	\end{equation}
	\item \(I_5\): when \(s\geq 1,r\leq -1\) or \(s\leq -1,r\geq 1\). This case is similar to \(I_4\). \(I_5\) can be shown to be given by:
	\begin{equation}
		\begin{small}
			\begin{aligned}
				I_5 &=
				\sum_{r=1}^{2M}\sum_{s=-2N}^{1}\sum_{m=-M}^{M-r}\sum_{n=-N-s}^{N}(e^{2j\pi(f(m+r,n+s)-f(m,n))}\\
				&\quad\quad\quad\quad\quad\quad\quad\quad\quad+e^{2j\pi(f(m,n)-f(m+r,n+s))})\\
				&=2\sum_{r=1}^{2M}\sum_{s=-2N}^{1}\Phi(z,-s,N)\Phi(c,r,M)\cos(2\pi(ar+bs))\\
				&=2\sum_{r=1}^{2M}\sum_{s=1}^{2N}\Phi(z,s,N)\Phi(c,r,M)\cos(2\pi(ar-bs)).
			\end{aligned}
		\end{small}
	\end{equation}
\end{itemize}
Since \(\cos(2\pi(ar+bs))+\cos(2\pi(ar-bs)) = 2\cos(2\pi ar)\cos(2\pi bs)\), \(I_4+I_5\) can be merged as follows:
\begin{equation}
	\begin{small}
		\begin{aligned}
			I_4+I_5 
			&= 4\sum_{r=1}^{2M}\sum_{s=1}^{2N}\Phi(z,s,N)\Phi(c,r,M)\cos(2\pi ar)\cos(2\pi bs)\\
			&=\frac{I_2I_3}{t}.
		\end{aligned}
	\end{small}
\end{equation}
Finally, (\ref{D2}) can be obtained, and the lemma is proved.

\bibliographystyle{IEEEtran}
\bibliography{ref}

\end{document}